\documentclass[10pt]{iopart}

\usepackage[utf8]{inputenc}
\usepackage[english]{babel}
\usepackage[colorlinks=true, allcolors=blue]{hyperref}
\usepackage{graphicx}
\usepackage{xcolor}
\usepackage{enumerate}
\usepackage{comment}
\usepackage{csquotes}  
\usepackage[sorting=none]{biblatex}
%% These imports are required to fix errors, somehow
% https://tex.stackexchange.com/questions/497820/iopart-class-is-not-friend-with-physics-package
% They must be imported before the math packages
\expandafter\let\csname equation*\endcsname=\relax 
\expandafter\let\csname endequation*\endcsname=\relax 
\usepackage{amsmath}
\usepackage{amssymb}
\usepackage{amsthm}
\usepackage{mathtools}

\graphicspath{{./figures/}}
\bibliography{references}

\newtheorem{prop}{Proposition}

%% Definitions

\newcommand{\abs}[1]{\left\lvert #1 \right\rvert}

\newcommand{\ket}[1]{\left\lvert #1 \right\rangle}
\newcommand{\ketbra}[2]{
    \left\lvert #1 \right\rangle\!
    \left\langle #2 \right\rvert
}

\newcommand{\expval}[1]{\left\langle #1 \right\rangle}

\newcommand{\dd}{\mathrm{d}}
\renewcommand{\Re}[1]{\operatorname{Re}\left\{#1\right\}}
\renewcommand{\Im}[1]{\operatorname{Im}\left\{#1\right\}}
\renewcommand{\tr}[1]{\trace\!\left\{ #1 \right\}}
\DeclareMathOperator{\trace}{tr}
\newcommand{\bound}[2]{E(#1, #2)}

\begin{document}

\title{Quantum speed limit for Kirkwood-Dirac quasiprobabilities}

\author{Sagar Silva Pratapsi\footnote{Corresponding author}}
\ead{spratapsi@uc.pt}
\address{CFisUC, Department of Physics, University of Coimbra, P-3004 - 516 Coimbra, Portugal}
\address{Instituto Superior Técnico, University of Lisbon, 1049-001 Lisbon, Portugal}
\address{Instituto de Telecomunicações, 1049-001 Lisbon, Portugal}

\author{Sebastian Deffner}
\ead{deffner@umbc.edu}
 \address{Department of Physics, University of Maryland, Baltimore County, Baltimore, MD 21250, USA}
 \address{Quantum Science Institute, University of Maryland, Baltimore County, Baltimore, MD 21250, USA}
\address{National Quantum Laboratory, College Park, MD 20740, USA}

\author{Stefano Gherardini}
\ead{stefano.gherardini@ino.cnr.it}
\address{Istituto Nazionale di Ottica del Consiglio Nazionale delle Ricerche (CNR-INO), Largo Enrico Fermi 6, I-50125 Firenze, Italy}
\address{European Laboratory for Non-linear Spectroscopy, Università di Firenze, I-50019 Sesto Fiorentino, Italy}

\begin{abstract}
What is the minimal time until a quantum system undergoing unitary dynamics can exhibit genuine quantum features? To answer this question we derive quantum speed limits for two-time correlation functions arising from statistics of measurements. These two-time correlators are described by Kirkwood-Dirac quasiprobabilities, if the initial quantum state of the system does not commute with the measurement observables. The quantum speed limits here introduced are derived from the Schr\"{o}dinger-Robertson uncertainty relation, and set the minimal time at which the real part of a quasiprobability can become negative and the corresponding imaginary part can be different from zero or crosses a given threshold. This departure of Kirkwood-Dirac quasiprobabilities from positivity is evidence for the onset of non-classical traits in the quantum dynamics. As an illustrative example, we apply these results to a conditional quantum gate by determining the optimal condition that gives rise to non-classicality at maximum speed. In this way, our analysis hints at boosted power extraction due to genuinely non-classical dynamics.  
\end{abstract}

\date{\today}

\maketitle

The question asking whether a quantum process is truly quantum looks as innocuous, as it is deep. While many sophisticated answers could be given, such as referring to violations of Bell~\cite{nielsen2010quantum} or Leggett-Garg~\cite{Leggett1985PRL} inequalities, the simplest answer is arguably found in the presence of non-classical correlations~\cite{Ollivier2001PRA}.

In this paper, we focus on exactly such correlations that characterize the statistics of measuring two distinct quantum observables $A$ and $B$, at the beginning and end of a unitary process with a non-negligible duration. Such two-time correlation functions~\cite{margenau1961correlation} have become ubiquitous in modern physics, ranging from condensed matter physics and quantum chaos~\cite{SilvaPRL2008,chenu2018quantum,dressel2018strengthening,alonso2019out,mohseninia2019strongly,Touil2020QST,Touil2021PRXQ,tripathy2024quantum,carolan2024operator} to quantum thermodynamics~\cite{TalknerPRE2007,Kafri2012PRA,Solinas2016probing,diaz2020quantum,levy2020quasiprobability,maffei2022anomalous,hernandez2022experimental,FrancicaPRE2023,Touil2024EPL}. 
Most quantum correlation functions can be computed even when the exact quantum state is not directly measurable~\cite{Buscemi_2014,SolinasPRAmeasurement,LostaglioKirkwood2022}.

Here we are interested in understanding if any universal statement can be made about the dynamics of such correlators. In fact, their time-evolution, like the dynamics of the system itself, are entirely generated by the Hamiltonian of the system. Hence, it appears obvious to consider \textit{quantum speed limits} for correlation functions~\cite{Pandey2023PRA}~\footnote{Evidently, such limits are no longer useful if the observables $A$ and $B$ are measured consecutively without letting the system evolve.}.
Quantum speed limits (QSLs)~\cite{DeffnerJPA2017} set constraints on the maximum speed of the quantum evolution~\cite{mohan2022quantum,carabba2022quantum}. From the seminal work of Mandelstam and Tamm~\cite{MandelstamJPhys1945,mandelstam1991uncertainty}, it is known that the quantum speed is tightly bound by the Schr\"{o}dinger-Robertson (SR) uncertainty relation~\cite{RobertsonPR1929,schrodinger1999heisenberg}. This gives the QSLs a fundamental connotation that links the minimal time to attain a quantum state transformation to the energy dispersion imposed by the system Hamiltonian. Obviously, correlation functions constructed from the statistics of measurement outcomes, recorded by measuring $A$ and $B$, have to respect similar time constraints.

In the present analysis, we focus on situations in which the initial quantum state does not commute with the measurement observables. In this non-commutative case, correlation functions are the \textit{Kirkwood-Dirac quasiprobabilities} (KDQ)~\cite{yunger2018quasiprobability,ArvidssonShukurJPA2021,DeBievrePRL2021,LostaglioKirkwood2022,SantiniPRB2023,BudiyonoPRAquantifying,wagner2023quantum,ArvidssonShukur2024review}, which in general are complex numbers whose real part can be also negative. The presence of negative real parts and of imaginary parts different from zero does not allow for building up a single joint distribution of events at multiple times according to classical probability theory~\cite{johansen2007quantum,LostaglioKirkwood2022,GherardiniTutorial,ArvidssonShukur2024review}. In particular, negativity has been interpreted as a meaningful trait of \textit{non-classicality}~\cite{SpekkensPRL2008}, given that negative quasiprobabilities, as well as anomalous weak values, are a witness of quantum contextuality~\cite{HofmannPRL2012,PuseyPRL2014,Dressel2014colloquium,Hofer2017quasi,KunjwalPRA2019,SchmidPRA2024,SchmidQuantum2024}, as experimentally confirmed in~\cite{PiacentiniPRL2016,piacentini2016measuring,CiminiQST2020}. Beyond its fundamental relevance, negativity has recently found experimental application to enhance both phase estimation~\cite{lupu2021negative} and work extraction~\cite{hernandez2022experimental}, beyond classical limits. In addition, Ref.~\cite{hernandezArXiv2024Interfero} introduces an interferometric procedure to reconstruct the imaginary parts of two-time KDQ distributions, thus giving access to non-commutativity.

In this work, we address the following questions: ``Can we predict the time at which a two-time correlator, in terms of KDQ, can become non-positive? And, can we take an advantage of such a prediction in a quantum technology application?'' As main results, we demonstrate that the QSL predicts the \textit{minimal time} for the emergence of non-positive KDQ, while measuring two non-commuting observables at distinct times. This goes beyond the QSL for the average of a single observable~\cite{mohan2022quantum,ShrimaliPRA2024}, and finds application in the interdisciplinary field of evaluating the energetics of quantum computing gates~\cite{VeitchNJP2012,GardasSciRep2018,buffoni2020thermodynamics,CimininpjQI2020,Deffner2021EPL,BuffoniPRL2022,stevens2022energetics,Aifer2022NJP,Gianani2022diagnostics,śmierzchalski2023efficiency,Aifer2023PRXQ}, as well as quantum synchronization~\cite{Aifer2024PRL}. Moreover, the fact that the probability associated with a measurement-outcome pair is described by a non-positive KDQ outlines the role played by quantum coherence or correlations as a quantum resource. For instance, in non-equilibrium work processes our QSLs can be used to identify, and possibly reduce, the time corresponding to the largest enhancement of work extraction due to the negativity of the real part of some KDQ. Knowing such a time, which can be obtained without solving the system dynamics, allows to derive a tight bound on the work extraction power. We illustrate this for the two-qubit controlled-unitary gate of Ref.~\cite{CimininpjQI2020}. 

%%%%%%%%%%%%%%%%%%%%%%%%%%%%%%%%%%%%%%%%%%
\section{Kirkwood-Dirac quasiprobabilities}

We start with notions and notations. Let $\rho$ be a density operator, and $A$ and $B$ two distinct quantum observables (Hermitian operators), evaluated at times $t=0$ and $t>0$ respectively. The two observables have the spectral decompositions $A=\sum_{\ell}a_{\ell}A_{\ell}$ and $B=\sum_{j}b_{j}B_{j}$ with $A_{\ell}=A_{\ell}^2$ and $B_{j}=B_{j}^2$. In the present analysis, the evolution of the quantum system is described by a unitary operator $U \equiv {\rm exp}(-i Ht / \hbar)$, where $H$ denotes the system Hamiltonian, $\hbar$ is the reduced Planck constant and $t$ the evolution time. In the following, we will also take into account the evolution of the quantum observable $B$, which is given by $B(t) \equiv U^{\dagger}B U$ in Heisenberg picture.

In general, KDQ are complex numbers and constitute a family of distributions of quasiprobabilities~\cite{ArvidssonShukur2024review,GherardiniTutorial}. For the purpose of this work, throughout the whole article, we take a representation of KDQ defined by two-time quantum correlators. In this regard, we can describe the statistics of the measurement-outcome pairs $(a_{\ell},b_{j})$, which occurs from evaluating the two quantum observables $A$ and $B$ at times $0,t$, with the KDQ
\begin{equation}\label{eq:def_quasiprob}
q_{\ell,j}(t) \equiv \tr{ \rho \, A_{\ell} \, B_j(t)},
\end{equation}
where $B_j(t) = U^\dagger B_j U$ is the projector $B_j$ evolved in the Heisenberg picture. In agreement with the no-go theorems of Refs.~\cite{PerarnauLlobetPRL2017,LostaglioKirkwood2022}, if the commutator of $\rho$ and $A$ is equal to zero ($[\rho,A]=0$), then the KDQ is a non-negative real number---as a standard probability obeying Kolmogorov axioms---and is given by the two-point measurement (TPM) scheme~\cite{campisi2011colloquium}.

The statistics provided by the TPM scheme can be experimentally assessed by a procedure based on sequential measurements, as in the classical case. On the contrary, as surveyed in~\cite{LostaglioKirkwood2022,GherardiniTutorial,ArvidssonShukur2024review}, the KDQ can be obtained via a reconstruction protocol that is able to preserve information on the non-commutativity of $\rho$ and the measurement observables. 
The most relevant properties of KDQ are: (i) $\sum_{\ell,j}q_{\ell, j}(t)=1$ $\forall t$; (ii) the \textit{unperturbed} marginals are recovered: $\sum_{\ell}q_{\ell, j}(t)=p_{j}(t)=\tr{\rho B_j(t)}$ and $\sum_{j}q_{\ell, j}(t)=p_{\ell}(0)=\tr{\rho A_{\ell}}$. The unperturbed marginal at time $t$ cannot be obtained by the TPM scheme if $[\rho, A] \neq 0$; (iii) Linearity in the initial state $\rho$; (iv) KDQ are equal to the joint probabilities 
\begin{equation}
\label{eq:TPM_joint}
p_{\ell, j}^\mathrm{TPM}(t) \equiv \tr{A_{\ell} \, \rho \, A_{\ell} \, B_j(t)} 
\end{equation}
determined by the TPM scheme when $[\rho, A] = 0$ $\forall \rho$. In the following, as customary in the literature, we will denote the real part of the KDQ as Margenau-Hill quasiprobability (MHQ)~\cite{HalliwellPRA2016,levy2020quasiprobability,hernandez2022experimental,PeiPRE2023}.

\subsection{Criteria of non-classicality}
\label{sec:criteria_of_nonclassicality}

As mentioned in the introduction, the KDQ may be non-classical. To quantify how far a KDQ is from being a positive real number, any of the following ``quantumness'' criteria can be used:
\begin{enumerate}
    \item[(i)] $\Re{q_{\ell, j}(t)} < 0$,
    \item[(ii)] $\left\lvert\Im{q_{\ell, j}(t)}\right\rvert > s_\mathrm{th}$,
\end{enumerate}
where $s_\mathrm{th} > 0$ is a threshold value beyond which we can say that the imaginary part is significantly different from zero. This value depends on the specific system under consideration, and we will provide an example in Section~\ref{sec:two-qubit-example}. We may similarly define a threshold for the real part without much modification to our results, although we do not follow this approach to keep the notation light. One of our goals in the following is to find how {\rm fast} the criteria (i) and (ii) can be satisfied~\footnote{A minimal time to non-positivity can be attained even for the quantumness criterion $\lvert\Re{q_{\ell, j}(t)} - p_{\ell, j}^\mathrm{TPM}(t)\rvert > k_\mathrm{th}$, with $k_\mathrm{th}$ a given threshold value, as discussed in~\cite{HePRA2024}.}.

\subsection{Real and imaginary parts of KDQ as expectation values of observables}

Our strategy is to bound the rate of change of the real and imaginary parts of the KDQ. We can write these two quantities as the expectation values of two different observables. In fact, defining the operators
\begin{equation}
    \rho_\ell
    \equiv
    \frac{\{\rho, A_\ell\}}{2}
    \quad\text{and}\quad
    \sigma_\ell
    \equiv
    \frac{[\rho, A_\ell]}{2i},
\end{equation}
which are time-independent and Hermitian, we can write
\begin{align}
    \Re{ q_{\ell, j}(t)}
    & = \tr{\rho_\ell \,B_j(t)}
    \equiv
    \expval{\rho_\ell}_{j,t},
    \label{eq:kdq_real}
    \\
    \Im{ q_{\ell, j}(t)}
    & = \tr{\sigma_\ell \,B_j(t)}
    \equiv
    \expval{\sigma_\ell}_{j,t}. 
\end{align}
Notice that, if $\sigma_\ell$ is the null matrix, then $\Im{ q_{\ell, j}(t)}=0$ for any t and $\Re{ q_{\ell, j}(t)}$ reduces to $p_{\ell, j}^\mathrm{TPM}(t)$.
If $[\rho,A_{\ell}]\neq 0$ for a given initial density operator $\rho$, then the KDQ of Eq.~\eqref{eq:def_quasiprob} can loose positivity.
This occurs if $B_j(t)$ in Eq.~\eqref{eq:def_quasiprob} is a projector onto the negative eigenspace of $\rho_\ell$~\cite{LostaglioKirkwood2022}.

We are going to bound the evolution of quantities of the type
\begin{equation}
\langle X \rangle_{j,t} \equiv \tr{X B_j(t)}.
\end{equation}
Up to normalization, $B_j(t)$ is in fact a density operator, which justifies the notation of expectation value, albeit it evolves in the Heisenberg representation according to the Hamiltonian $-H$. In the next section, we provide bounds for the evolution of the expectation value of general observables.
For the remainder of this article, and for simplicity of notation, we assume that $B_j$ has unit trace. However, the bounds we will derive in the next sections are applicable in the more general case where the trace of $B_j$ is not equal to $1$, by re-scaling $B_j$ as $B_j = \tr{B_j} \tilde B_j$ where the operator $\tilde B_j$ has unit trace. We can then use the linearity of the trace in $\expval{X}_{j,t}$ to bring $\tr{B_j}$ outside of the trace and make use of the same bounds.

%%%%%%%%%%%%%%%%%%%%%%%%%%%%%%%%%%%%%%%%%%
\section{QSL for expectation values of observables}

We first derive a result that concerns the time-derivative of the expectation value $\langle X \rangle_{t} \equiv \tr{ X\rho(t)}$ with respect to an arbitrary density operator $\rho$ at a given time $t$.

\subsection{Bounding the rate of change using the SR uncertainty relation}

Before we present our original contributions, in the next sub-sections, we begin with some well-known results based on the SR uncertainty relation~\cite{PhysRevX.12.011038}. The SR uncertainty principle for any given observables $X,Y$ and
density operator $\rho(t)$ states that 
\begin{equation}\label{eq:sr_uncertainty}
    \Delta X_{t}^2
    \Delta Y_{t}^2
    \geq
    \left[
        \expval{\frac{\{X, Y\}}{2}}_{t}
        -\expval{X}_{t}\expval{Y}_{t}
    \right]^2
    + \expval{\frac{[X, Y]}{2i}}_{t}^2,
\end{equation}
where $\{\cdot,\cdot\}$ is the anti-commutator, and $\Delta C_{t} \equiv \sqrt{\langle C^2 \rangle_{t} -\langle C\rangle_{t}^2}$ with $C=X,Y$.

The von Neumman equation for the density operator, evolving according to a Hamiltonian $H$ (assumed as time-independent), is written as $\dot{\rho}(t)=[H, \rho(t)] / (i\hbar) \equiv \{ \rho(t), L(t)\} / 2$ where $L$ is the symmetric logarithmic derivative. Differentiating $\langle X\rangle_{t}$ over time leads to
\begin{equation}\label{eq:uncertainty_bound}
    \left|
        \frac{\dd}{\dd t}
        \langle X \rangle_{t}
    \right|
    = 
    \left\lvert
        \left\langle
        \frac{\{X, L(t)\}}{2}
        \right\rangle_{t}
    \right\rvert
     \leq
     \Delta L_{t} \,
     \Delta X_{t},
\end{equation}
where we used the SR uncertainty relation~\eqref{eq:sr_uncertainty} (see also \cite{nicholson2020time,PhysRevX.12.011038}) and the fact that $\expval{L(t)}_{t} = 0$ for any $t$, meaning that $\Delta L_{t}=\sqrt{\langle L^2\rangle_t}$. By definition, $\Delta L_t$ generally represents the square root of the {\it quantum Fisher information} $\mathcal{F}_Q(t)$ computed with respect to $\rho(t)$: $\Delta L_t = \sqrt{ \mathcal{F}_Q(t) }$. It generally holds that $\Delta L_t \leq 2 \Delta H_t / \hbar$ $\forall t$, where the equality applies in the case where $\rho$ is a pure state, whereby we recover the QSL in Ref.~\cite{mohan2022quantum}. Moreover, as shown in \ref{app:proof2}, $\Delta L$ is time-independent in our case-study, provided $H$ is constant.

\subsection{Explicit bounds on the evolution of $\langle X \rangle_t$}

We now present a novel contribution of this work. We derive explicit time-dependent bounds on $\expval{X}_t$ by bounding $\Delta X_{t}$ from above, following \ref{app:proof3}.
From Eq.~\eqref{eq:uncertainty_bound}, we then arrive at the differential inequality
\begin{equation}\label{eq:diff_inequality}
\left\lvert
    \frac{\dd}{\dd t}
    \langle X \rangle_{t}
\right\rvert
\leq
\Delta L \,
\sqrt{
    (x_1 \!+ x_d) \langle X \rangle_{t}
    - \langle X \rangle^2_{t}
    - x_1 x_d
}\,,
\end{equation}
where $x_1 \leq \cdots \leq x_d$ are the eigenvalues of $X$. As a result, integrating \eqref{eq:diff_inequality} results in the bounds (see \ref{app:proof4})
\begin{eqnarray}
    \langle X \rangle_{t} &\geq&
    \bound{X}{\tau_0 + \Delta L \, t},\label{eq:expval_lower_bound} \\
    \langle X \rangle_{t} &\leq&
    \bound{X}{\tau_0 - \Delta L \, t},\label{eq:expval_lower_bound_im}
\end{eqnarray}
where the function $\bound{X}{\tau}$, which we define below and is illustrated in Fig.~\ref{fig:Efunction}, interpolates the maximum and minimum eigenvalues of $X$:
\begin{equation}\label{eq:formal_lower_bound}
    \bound{X}{\tau} \equiv
    \begin{cases}
         x_d \,, &
         \tau \leq 0 \\
         x_d
         \cos^{2}(\frac{\tau}{2}) + x_1 \sin^{2}(\frac{\tau}{2})
         \,, &
         0\leq \tau \leq \pi \\
         x_1, &
         \tau \geq \pi \,.   
    \end{cases}
\end{equation}
In Eqs.~\eqref{eq:expval_lower_bound}-\eqref{eq:expval_lower_bound_im}, the initial angle $\tau_0$ is implicitly defined by the equality $\bound{X}{\tau_0} = \langle X \rangle_{t=0}$.
We can write it explicitly as $\tau_0 = \tau(X, \langle X\rangle_{0})$ with
\begin{equation}
    \tau(X, x)
    \equiv
    \arccos\left({
    \frac{2x - x_1 - x_d}{x_1 - x_d}
    }\right).
    \label{eq:interpolation_angle}
\end{equation}
The function $\tau(X, x)$ represents the ``interpolation angle'' quantifying where the value $x$ falls between $x_1$ and $x_d$ (minimal and maximal eigenvalues of $X$). This notation will be useful in the next section.

\begin{figure}[t]
    \centering
    \includegraphics[width=0.7\columnwidth]
    {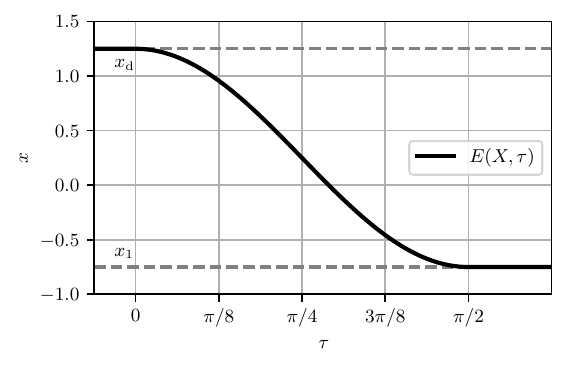}
    \caption{We illustrate the function $E(X, \tau)$, defined in Eq.~\eqref{eq:formal_lower_bound}, that enters the lower and upper bounds~\eqref{eq:expval_lower_bound}-\eqref{eq:expval_lower_bound_im}.
    It interpolates the minimum ($x_\mathrm{1}$) and maximum ($x_\mathrm{d}$) eigenvalues of $X$, using a trigonometric function of the angle $\tau$. The inverse function $\tau(X, x)$, defined in Eq.~\eqref{eq:interpolation_angle}, returns the angle $\tau_x$ such that $E(X, \tau_x)=x$.}
    \label{fig:Efunction}
\end{figure}

\subsection{Bounds on derivatives and unified bounds}

We can derive a further bound by noting that the derivative $\dd\langle X \rangle_t/ \dd t = \langle \dot X \rangle_t$ is itself the expectation value of the observable $\dot X \equiv [X, H] / i\hbar$. Therefore, $\langle \dot X \rangle_{t}$ can be bounded by \eqref{eq:expval_lower_bound} so that we obtain closed-form bounds as
\begin{equation}\label{eq:kdq_real_analytic_bound_derivative}
    \langle X \rangle_{t}
    \geq
    \expval{X}_{0} +
    \int_{0}^{\Delta L t} 
    \bound{\dot X}{\tau_0' + \xi} \, \dd \xi,
\end{equation}
where $\tau_0'$ is implicitly defined by the equality $\bound{\dot X}{\tau_0'}=\langle \dot X \rangle_{0}$ for $t=0$. The \textit{unified} lower-bound on $\langle X \rangle_{t}$ is thus the maximum between the right-hand-sides of Eqs.~\eqref{eq:expval_lower_bound} and~\eqref{eq:kdq_real_analytic_bound_derivative}.

%%%%%%%%%%%%%%%%%%%%%
\subsection{Saturation of the bounds} 

Using the Hamiltonian
$H = \frac{\hbar\omega}{2}\left(\ketbra{x_1}{x_d} + \ketbra{x_d}{x_1}\right),$
for some real number $\omega$,
and $\rho = \ketbra{x_d}{x_d}$ at $t=0$, the evolution of $\expval{X}_{t}$ exactly matches the right-hand side of the bound~\eqref{eq:expval_lower_bound}. The reader can find the technical details in \ref{app:proof5}.

This construction is important as it highlights the geometric interpretation of the inequality \eqref{eq:expval_lower_bound}, whose right-hand side represents the \textit{fastest path}, steered by $H$, to go from the maximum to the minimum eigenvalue of $X$.

%%%%%%%%%%%%%%%%%%%%%%%%%%%%%%%%%%%%%%%%%%
\section{QSL for Kirkwood-Dirac quasiprobabilities}

Following the results of the previous section, we determine time-dependent bounds on the time-derivative of KDQ $q_{\ell, j}(t)$.

The analysis from Eq.~\eqref{eq:uncertainty_bound} to Eq.~\eqref{eq:kdq_real_analytic_bound_derivative} is valid for a generic Hermitian observable $X$. We now specialize it to the real and imaginary parts of KDQ by setting $X = \rho_{\ell}, \sigma_{\ell}$ in Eqs.~\eqref{eq:expval_lower_bound}-\eqref{eq:expval_lower_bound_im} respectively. This leads us to the bounds
\begin{align}
\Re{q_{\ell,j}(t)}
& \geq
\bound{\rho_\ell}{\tau^\mathrm{re}_{\ell,0} + \Delta L_j \, t},
\label{eq:kdq_real_analytic_bound} \\
\Im{q_{\ell, j}(t)}
& \geq
\bound{\sigma_\ell}{\tau^\mathrm{im}_{\ell,0} + \Delta L_j \, t},
\label{eq:kdq_imag_analytic_bound}
\end{align}
where $\tau^\mathrm{re}_{\ell,0}, \tau^\mathrm{im}_{\ell,0}$ are defined by the equality of the bounds at $t=0$, and can be computed using Eq.~\eqref{eq:interpolation_angle}. Moreover, $L_j(t)$ is the symmetric logarithmic derivative built on the Hamiltonian $-H$ and the state $B_j(t)$ such that $dB_{j}(t)/dt = [ -H, B_{j}(t) ]/i\hbar = \{ B_{j}(t), L_j(t) \}/2$. Thus, the standard deviation $\Delta L_{j,t}$ is also computed with respect to $B_j(t)$, i.e., $\Delta L_{j,t} = \sqrt{ \langle L_j(t)^2\rangle_t } = \tr{ L_j(t)^{2} B_j(t) }$. It is worth noting that, following the same steps in \ref{app:proof2}, we can prove that $\Delta L_j$ is time-independent for all $j$, provided $H$ is constant.

Finally, replacing $\dot X = \dot \rho_\ell, \dot \sigma_\ell$ in Eq.~\eqref{eq:kdq_real_analytic_bound_derivative} can lead to more refined bounds on the real and imaginary parts of the KDQ, as also shown in Fig.~\ref{fig:kdq_real_bounds} for the internal energy variations of a qubit subject to a controlled-unitary gate.

%%%%%%%%%%%%%%%%%%%%%
\subsection{Commutative limit} 

Let us focus on the case of $[\rho, A_\ell] = 0$, whereby the KDQ are equal to the joint probabilities returned by the TPM scheme [Eq.~\eqref{eq:TPM_joint}] that, by definition, are always positive and $\in[0,1]$. If $[\rho, A_\ell] = 0$, then the imaginary part of the KDQ is zero. Given $A_{\ell} \neq \mathbb{I}$ with $\mathbb{I}$ the identity operator, the minimum eigenvalue of $\rho_{\ell}$ (here equal to $A_{\ell}\rho A_{\ell}$) is $r_1 = 0$.
As a result, from Eq.~\eqref{eq:kdq_real_analytic_bound}, one has that
\begin{equation}\label{eq:bound_on_p_TPM}
    p^\mathrm{TPM}_{\ell,j}(t)
    \geq
    r_{d}\cos^{2}\left(
     \frac{\tau_{\ell,0}^\mathrm{re} +\Delta L_j  \, t}{2}
    \right).
\end{equation}
Hence, as expected, no negativity can be observed from Eq.~\eqref{eq:bound_on_p_TPM}. 

\subsection{Time to non-positivity}

The bounds Eqs.~\eqref{eq:kdq_real_analytic_bound}-\eqref{eq:kdq_imag_analytic_bound} allow us to derive minimal times to non-positivity.
Following the two criteria (i)-(ii) of non-classicality in Section~\ref{sec:criteria_of_nonclassicality}, we ask how long it takes for the right-hand sides of \eqref{eq:kdq_real_analytic_bound}-\eqref{eq:kdq_imag_analytic_bound} to go from their initial values to the target values of 0 and $s_\mathrm{th}$ respectively, namely what are the times $T_{\ell, j}^\mathrm{re},T_{\ell, j}^\mathrm{im}$ such that $E(\rho_\ell, \tau_{\ell,0}^\mathrm{re} + \Delta L_j T_{\ell,j}^\mathrm{re})=0$ and $E(\sigma_\ell, \tau_{\ell,0}^\mathrm{im} + \Delta L_j T_{\ell,j}^\mathrm{im})=s_\mathrm{th}$.
Using Eq.~\eqref{eq:interpolation_angle}, the initial and target values can be expressed in terms of the function $\tau(X, x)$. As a result, for the $(\ell,j)$th KDQ, the minimal times $T_{\ell, j}^\mathrm{re},T_{\ell, j}^\mathrm{im}$ read respectively as
\begin{align}
    T_{\ell, j}^\mathrm{re}
    &=
    \frac{
        \tau(\rho_\ell, 0)
        -\tau^\mathrm{re}_{\ell,0}
    }{\Delta L_j}\,,
    \label{eq:time_qsl}
    \\
    T_{\ell, j}^\mathrm{im}
    &=
    \frac{
        \tau(\sigma_\ell, s_\mathrm{th})
        -\tau^\mathrm{im}_{\ell,0}
    }{\Delta L_j}\,.
    \label{eq:time_qsl_im}
\end{align}
Equation~\eqref{eq:time_qsl} coincides with the Mandelstam's and Tamm's result~\cite{MandelstamJPhys1945,mandelstam1991uncertainty} in the following circumstance: $\rho = \ketbra{\psi}{\psi}$ is a pure state, $A_\ell$ is the identity (i.e., no measurement is performed at time $t=0$), and $B_j(t)=\rho$ at a given time $t$. Under these assumptions, indeed, $T_{\ell j}^\mathrm{re} = \hbar \pi \, / \, ( 2\Delta H )$. Hence, in analogy, we can still interpret $T_{\ell j}^\mathrm{re}$ as the minimum time for a quantum system to evolve towards an orthogonal state, even in the more general setting with non-commutative observables and initial state.

%%%%%%%%%%%%%%%%%%%%%%%%%%%%%%%%%%%%%%%%%%
\section{Enhancing power extraction}

Non-commutativity between the initial state $\rho$ and a time-dependent Hamiltonian implementing a work protocol can be a resource to \textit{enhance quantum work extraction} beyond what can be achieved by any classical system~\cite{hernandez2022experimental,Touil2022JPA,GherardiniTutorial}. To this end, let us recall the definition of the extractable work $W_\mathrm{ext}(t)$ at time $t$ in a coherently-driven closed quantum system. We make use of the spectral decomposition of the system Hamiltonian, $H(t_k)=\sum_{n}E_{n}(t_k)\Pi_{n}(t_k)$, with $k=1,2$, $n\in\{\ell,j\}$ and $\{\Pi_{n}(t_k)\}$ denoting the sets of projectors over the energy basis. $W_\mathrm{ext}(t)$ is thus defined as~\cite{pusz1978passive,Alicki1979}
\begin{eqnarray}
     W_\mathrm{ext}(t) &\equiv& -\langle w(t)\rangle = \sum_{\ell, j} \left( E_{\ell}(0) - E_{j}(t) \right) q_{\ell, j}(t) =\nonumber \\
     &=& \tr{\rho \, H(0)} - \tr{U\rho \, U^{\dagger}H(t)},
\end{eqnarray}
where $w(t) \equiv E(t) - E(0)$ is the \textit{stochastic work} so that $w_{\ell,j}(t) \equiv E_{j}(t) - E_{\ell}(0)$.
A necessary condition to enhance $W_\mathrm{ext}(t)$, beyond the maximum value achievable classically, is that $\Re{ q_{\ell,j}(t)} < 0$ for some time $t$~\cite{hernandez2022experimental}.
Negative MHQ allow for \textit{anomalous energy transitions}, i.e., work realizations $w_{\ell,j}$ that occur with a negative quasiprobability.
Thus, if negative MHQ are associated with positive $w_{\ell,j}$, then the extractable work is boosted.

Considering time constraints is important in quantum engines and energy conversion devices~\cite{UzdinPRX2015,MyersAVS2022,CangemiArXiv2023,ReviewQBattery}, where the energy power depends on the times in which the strokes of a given machine are accurately performed. Thus, for work extraction purposes, one would like to achieve the maximum possible value of $W_\mathrm{ext}(t)$ in the shortest possible time. This means that, if also a boost of work extraction due to negative MHQ is included, then one needs to derive the minimum time at which $\Re{q_{\ell,j}(t)}<0$ for positive work realizations $w_{\ell, j}(t) > 0$. At the same time, the MHQ associated with negative $w_{\ell, j}(t)$ have to be positive. Hence, the optimization of the work extraction in finite-time transformations requires to maximize the \textit{work extraction power}, which is defined by $\mathcal{P}(t) \equiv - \langle w(t)\rangle / T_\mathrm{max}$, where $T_\mathrm{max}$ is the time for maximum extractable work~\cite{Touil2022JPA}.

The optimal value of $\mathcal{P}$ comes from a trade-off between maximum extractable work (even enhanced by negativity) and minimum time period $T_\mathrm{max}$. Such an optimal $\mathcal{P}$ could be bounded by an approximated power function computed using the minimal time $T_{\ell j}^\mathrm{re}$ to negative quasiprobabilities. 

\subsection{Two-qubit example}
\label{sec:two-qubit-example}

We conclude the analysis by verifying our bounds on both the real and imaginary parts of KDQ, as well as the bound on $\mathcal{P}$ using $T_{\ell j}^\mathrm{re}$ for the two-qubit controlled-unitary gate experimentally realized in~\cite{CimininpjQI2020}. The gate evolves according to the time-independent Hamiltonian 
\begin{equation}
H_{\rm 2qubits} = \frac{\omega_L}{2} (Z_1 + Z_2) + \frac{\omega_\mathrm{int}}{2}{ \ketbra{1}{1} \otimes X},
\end{equation}
where $Z_i$ is the $Z$-Pauli matrix applied to the qubit $i$, $X$ is the $X$-Pauli matrix, and $|0\rangle,|1\rangle$ are respectively the ground and excited states of the local Hamiltonian of each qubit. The first qubit acts as a `control' knob: if it is in the excited state, the second qubit (`target') undergoes a rotation of a parameterized angle.

In this process, the internal energy of the target qubit changes with time, as provided by computing the partial trace of the two-qubit state with respect to the degrees of freedom of the control qubit. Assuming the control qubit can be manipulated at will, we interpret the internal energy variation of the target qubit as thermodynamic work exerted by the control qubit. We compute the KDQ of the internal energy variation of the target qubit by setting $A_\ell = \ketbra{\ell}{\ell}$ and $B_j = \ketbra{j}{j}$ with $\ell,j=0,1$. Preparing the global system in the state $|-\rangle |1 \rangle$, where $|-\rangle \equiv (|0\rangle - |1\rangle) / \sqrt{2}$, leads to non-positivity of the computed KDQ. Here, negativity is a signature of extractable work from the target qubit, in a regime where the corresponding value returned by the TPM scheme, $W_\mathrm{ext}^\mathrm{TPM}(t) \equiv \sum_{\ell,j}( E_{\ell}(0) - E_{j}(t)) p_{\ell, j}^\mathrm{TPM}(t)$, is zero for any parameters choice, due to the state-collapse upon the first energy measurement of the TPM scheme and the specific choice of the initial state for the global system.

\begin{figure*}[t]
    \centering
    \includegraphics[width=\columnwidth]{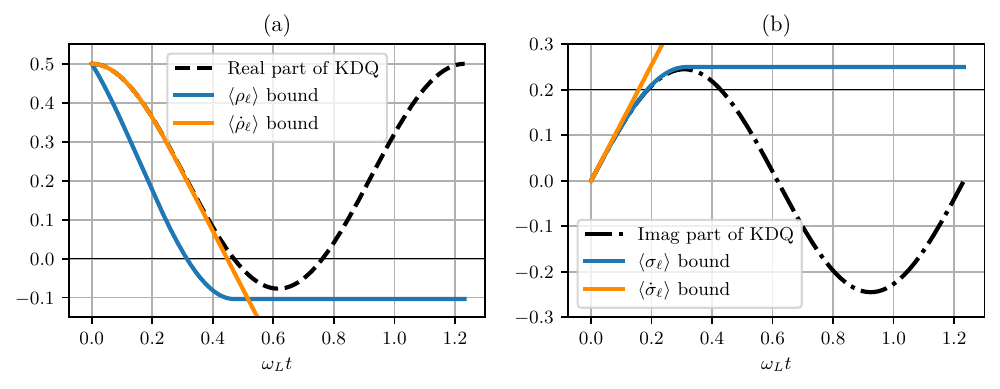}
    \caption{
    Quasiprobability associated to the internal energy variations of the target qubit in a two-qubit controlled-unitary gate with Hamiltonian $H_{\rm 2qubits}$. \textit{Black line}: Real [panel (a)] and imaginary [panel (b)] parts of the KDQ $q_{1,1}(t)$ using 
    $A_1 = B_1 = \ketbra{1}{1}$, with $\{|0\rangle,|1\rangle\}$ computational basis. Globally the quantum gate is initialized in the pure state $\ket{-}\ket{1}$. Negativity in $\Re{q_{1,1}(t)}$ is observed choosing $\omega_L=1$ and $\omega_\mathrm{int} = 5$. \textit{Blue and orange lines}: Lower-bounds to the quasiprobability, both derived from the uncertainty principle \eqref{eq:sr_uncertainty}. They are respectively obtained by constraining the evolution of $\expval{\rho_1}_{1,t}, \expval{\sigma_1}_{1,t}$ [Eqs.~\eqref{eq:kdq_real_analytic_bound}-\eqref{eq:kdq_imag_analytic_bound}] (blue lines), and $\expval{\dot \rho_1}_{1,t}, \expval{\dot\sigma_1}_{1,t}$ [Eq.~\eqref{eq:kdq_real_analytic_bound_derivative} with $X=\dot\rho_\ell, \dot\sigma_\ell$] (orange lines).
    }
    \label{fig:kdq_real_bounds}
\end{figure*}

\begin{figure}[t]
    \centering
    \includegraphics[width=0.6\columnwidth]{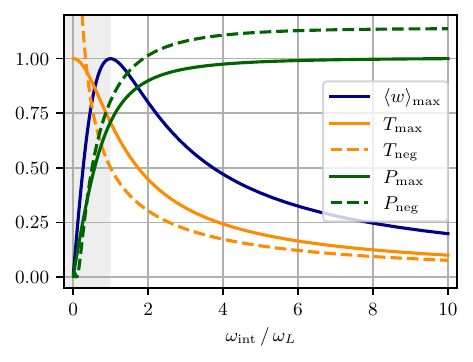}
    \caption{The time to negativity $T_\mathrm{neg}$ (dashed orange) acts as a lower-bound to the time $T_\mathrm{max}$ (solid orange), at which the maximum average work $\langle w\rangle_\mathrm{max}$ (blue) is extracted from the target qubit under conditions of Fig.~\ref{fig:kdq_real_bounds}.
    The power at maximum energy extraction, $\mathcal{P}_\mathrm{max} = \langle w\rangle_\mathrm{max} / T_\mathrm{max}$ (solid green), can be well-approximated by the power $\mathcal{P}_\mathrm{neg}$ computed at $T_\mathrm{neg}$ (dashed green). $\mathcal{P}_\mathrm{neg}$ is an upper-bound of $\mathcal{P}_\mathrm{max}$.
    This reasoning cannot be applied to the light gray region, since the negativity is not present. The value of the energies, times and powers are respectively normalized by $E_\mathrm{ref}=0.5$, $t_\mathrm{ref} = \pi$ and $\mathcal{P}_\mathrm{ref}\approx 0.32$.}
    \label{fig:power}
\end{figure}

In Fig.~\ref{fig:kdq_real_bounds}(a-b) we plot the real and imaginary parts of the quasiprobability $q_{1,1}(t)$ (i.e., $\ell,j=1$), respectively, together with the lower-bounds \eqref{eq:kdq_real_analytic_bound_derivative} with $\dot X = \dot \rho_\ell, \dot \sigma_\ell$, and \eqref{eq:kdq_real_analytic_bound}-\eqref{eq:kdq_imag_analytic_bound}, for $\omega_L=1$ and $\omega_\mathrm{int}=5$. One has to take the tightest between the blue and orange curves as the lower-bound to $\mathrm{Re}\{ q_{1,1}(t) \}, \mathrm{Im}\{ q_{1,1}(t) \}$ for any time $t$. In Fig.~\ref{fig:kdq_real_bounds}(a), the first time to negativity (given by crossing the threshold at $\mathrm{Re}\{ q_{1,1} \}=0$) is well identified by the lower-bound \eqref{eq:kdq_real_analytic_bound_derivative}, while the first time $\mathrm{Im}\{ q_{1,1}(t) \}$ touches the threshold set at $0.2$ is provided by the lower-bound \eqref{eq:kdq_imag_analytic_bound}. This choice to set the threshold of $\mathrm{Im}\{ q_{1,1} \}$ at $0.2$ is dictated by the evidence that, in possible experiments as recently in \cite{hernandezArXiv2024Interfero}, the error bars in reconstructing the imaginary part of a quasiprobability are around $10\%$ of the actual value.

In Fig.~\ref{fig:power}, we plot the time to negativity $T_\mathrm{neg}$, which is defined as the time at which the bound of $\langle\dot\rho_\ell\rangle_{j,t}$ [Eq.~\eqref{eq:kdq_real_analytic_bound_derivative} with $X=\dot\rho_\ell$] in Fig.~\ref{fig:kdq_real_bounds} reaches zero. $T_\mathrm{neg}$ is a lower-bound of the time $T_\mathrm{max}$ at which the maximum amount of energy can be possibly extracted from the target qubit. The work extraction power $\mathcal{P}_\mathrm{neg}$ obtained at $T_\mathrm{neg}$ turns out to be a good approximation for the power $\mathcal{P}_\mathrm{max}$ at $T_\mathrm{max}$. Notably, it is an upper-bound that can be experimentally measured via local measurements.

%%%%%%%%%%%%%%%%%%%%%%%%%%%%%%%%%%%%%%%%%%
\section{Conclusions}

We have derived time-dependent bounds for the Kirkwood-Dirac quasiprobabilities, which stem from using the Schr\"{o}dinger-Robertson uncertainty relation on the time-derivative of such quasiprobabilities. Our derivation can be interpreted as an extension to non-commutative operators of the quantum speed limit bound obtained by Mandelstam and Tamm.

The first consequence of our results is to determine the minimal time it takes for the real part of a KDQ to become negative, and for the corresponding imaginary part to go from zero to exceed a threshold value. The minimal time to negativity has an application as a bound on the maximum power of a finite-time work extraction process. Consequently, we suggest to further investigate more complex quantum gates~\cite{FedorovReview2022}, and devices for energy conversion~\cite{UzdinPRX2015,MyersAVS2022,CangemiArXiv2023} including quantum batteries~\cite{ReviewQBattery}. 
Moreover, the quantum speed limit time-bound on KDQ could also predict abrupt or anomalous changes in out-of-time-ordered correlators (OTOC), which can be equivalently expressed as the characteristic function of a KDQ distribution~\cite{yunger2018quasiprobability,mohseninia2019strongly,GherardiniTutorial}. 

%%%%%%%%%%%%%%%%%%%%%%%%%%%%%%%%%%%%%%%%%%
\section*{Acknowledgements}

S.S.P.\ acknowledges support from the ``la Caixa'' foundation through scholarship No.\ LCF/BQ/DR20/11790030, and from national funds by FCT -- Fundação para a Ciência e Tecnologia, I.P., through scholarship 2023.01162.BD, and in the framework of the projects UIDB/04564/2020 and UIDP/04564/2020, with DOI identifiers 10.54499/UIDB/04564/2020 and 10.54499/UIDP/04564/2020, respectively.
S.G.\ acknowledges financial support from the project PRIN 2022 Quantum Reservoir Computing (QuReCo), the PNRR MUR project PE0000023-NQSTI financed by the European Union--Next Generation EU, and the MISTI Global Seed Funds MIT-FVG Collaboration Grant ``Revealing and exploiting quantumness via quasiprobabilities: from quantum thermodynamics to quantum sensing''.
S.D.\ acknowledges support from the U.S. National Science Foundation under Grant No. DMR-2010127 and the John Templeton Foundation under Grant No. 62422.

%%%%%%%%%%%%%%%%%%%%%%%%%%%%%%%%%%%%%%%%%%
\appendix

%%%%%%%%%%%%%%%%%%%%%%%%%%%%%%%
\section{Proof I: Bounds on the expectation value using the Hamiltonian and symmetric logarithmic derivative}\label{app:proof2}
%%%%%%%%%%%%%%%%%%%%%%%%%%%%%%%

In the main text, $B_j(t)$ plays the role of a (unnormalized) density operator $\rho$.
In this section, we derive bounds on the evolution-rate for the expectation value of an observable $X$, taken with respect to $\rho$.
Suppose $\rho$ evolves unitarily as $\rho(t) = U \rho(0) U^\dagger$, with $U = \exp(-i Ht / \hbar)$ for some constant Hermitian operator $H$, so that
\begin{equation}
    \frac{\dd\rho(t)}{\dd t}
    = \frac{[H, \rho(t)]}{i \hbar}.
    \label{eq:von_neumann_equation}
\end{equation}
Another way to write the evolution of $\rho$ is using the symmetric logarithmic derivative operator $L$ that is defined implicitly, for all times, by
\begin{equation}
    \frac{\dd \rho(t)}{\dd t}
    \equiv
    \frac{1}{2}
    \{\rho(t), L(t)\}.
    \label{eq:symlogdev_definition}
\end{equation}
The spectral decomposition $\rho = \sum_j p_j \ketbra{j}{j}$ allows us to write $L$ in the basis $\{ \ket{j} \}$, such that
\begin{equation}
    L_{ij}
    =
    \frac{2}{i \hbar}
    \frac{p_j - p_i}{p_i + p_j}
    H_{ij},
\end{equation}
where
$L_{ij}
\equiv
\langle i \vert L \vert j \rangle$
and
$H_{ij} \equiv \langle i \vert H \vert j \rangle$.
If $p_i = p_j = 0$, then we take $L_{ij}$ equal to zero.
Moreover, for a constant Hamiltonian $H$, we can take
\begin{equation}
L(t) = UL(0)U^\dagger,
\end{equation}
which is compatible with the definition of $L$ for any time $t$, that is,
\begin{equation}
\frac{\{\rho(t), L(t)\}}{2}
= U\frac{\{\rho(0), L(0)\}}{2}U^\dagger
= U\frac{[H, \rho(0)]}{i\hbar}U^\dagger
= \frac{[H, \rho(t)]}{i\hbar}
= \dot \rho(t).
\end{equation}

As a result, we have two ways to bound the evolution of the expectation value $\expval{X}_{t}$. The first way follows the Hamiltonian formalism:
\begin{equation}\label{eq:SM_1st_bound}
    \abs{\frac{\dd \expval{X}_{t}}{\dd t}}
    =
    \abs{\expval{
    \frac{[X, H]}{i \hbar}
    }_{t}}
    \leq
    \frac{2}{\hbar}\Delta H \, \Delta X_{t},
\end{equation}
where we used the SR uncertainty relation \eqref{eq:sr_uncertainty} with $Y = H$.

The second bound, making use of the symmetric logarithmic derivative, reads as
\begin{equation}\label{eq:SM_2nd_bound}
    \abs{\frac{\dd \expval{X}_{t}}{\dd t}}
    =
    \abs{\expval{
    \frac{\{X, L\}}{2}
    }_{t}}
    \leq
    \Delta L_{t} \, \Delta X_{t},
\end{equation}
where we used the replacement $Y=L(t)$. Note that, using the definition of $L$, we have $\expval{L(t)}_t = 0$ $\forall t$.

We note that, if $H$ is constant, then also $\Delta L$ is constant. Indeed, using that $L(t) = U L(0) U^\dagger$ like $\rho(t) = U \rho(0) U^\dagger$, we get:
\begin{equation}
    \Delta L_{t}^2
    =
    \expval{L(t)^2}_t
    =
    \tr{\rho(t) L(t)^2}
    =
    \tr{\rho(0) L(0)^2}
    =
    \Delta L^2 \quad \forall t,
\end{equation}
which implies that $\Delta L_{t}^2$ is constant. This fact allows to calculate the bounds at $t=0$ without solving for the system dynamics.

It turns out that, if $\textrm{rank}(\rho)=1$, then the two bounds \eqref{eq:SM_1st_bound} and \eqref{eq:SM_2nd_bound} are equal. This can be shown by setting $\rho = \ketbra{\varphi}{\varphi}$, whereby
\begin{equation}
    \frac{\Delta L^{2}}{2}
    =
    \tr{
        \left(
        \frac{\{\rho(t), L(t)\}}{2}
        \right)^2
    }
    =
    \tr{
        \left(
        \frac{[H, \rho(t)]}{i\hbar}
        \right)^2
    }
    = \frac{2 \Delta H^2}{\hbar^2},
\end{equation}
i.e., $\Delta L = 2 \, \Delta H / \hbar$, meaning that the two bounds \eqref{eq:SM_1st_bound} and \eqref{eq:SM_2nd_bound} coincide.

More in general instead ($\textrm{rank}(\rho) > 1$), the use of $\Delta L$ results in a tighter bound.
To see this, let us equate the right-hand sides of \eqref{eq:von_neumann_equation} and \eqref{eq:symlogdev_definition}, multiply them by $L(t)$, and take the trace. This results in
\begin{equation}
    \Delta L^2
    =
    \expval{\frac{[L(t), H]}{i\hbar}}_{t}
    \leq
    \frac{2}{\hbar}
    \Delta L \, \Delta H,
\end{equation}
where the inequality comes from the uncertainty principle. Therefore, in general,
\begin{equation}
    \Delta L \leq \frac{2}{\hbar} \, \Delta H
\end{equation}
that proves that using the symmetric logarithmic derivative provides a tighter bound.

%%%%%%%%%%%%%%%%%%%%%%%%%%%%%%%
\section{Proof II: Upper bound on the variance of a Hermitian operator}\label{app:proof3} 
%%%%%%%%%%%%%%%%%%%%%%%%%%%%%%%

When calculating the bounds on quasiprobabilities using the Schr\"{o}dinger-Robertson uncertainty relation, we are led to a differential equation depending on $\Delta X \equiv \sqrt{ \langle X^{2}\rangle - \langle X\rangle^{2} }$ with $X$ being a generic Hermitian operator. The quantity $\Delta X$ can be bounded using only the expectation value $\expval{X} = \tr{X \rho}$, namely
\begin{align}
    \Delta X^2
    &= \expval{X^2} - \expval{X}^{2}  \nonumber
    \\&\leq
    (x_1 + x_d) \expval{X} - x_1 x_d - \expval{X}^{2},
    \label{eq:bound_Delta_X}
\end{align}
where $x_1, x_d$ are respectively the lowest and highest eigenvalues of $X$.

To derive Eq.~\eqref{eq:bound_Delta_X}, we use the following proposition.
\begin{prop}\label{prop_1}
Let $X$ be a Hermitian operator with eigenvalues $x_1 \leq \cdots \leq x_d$ and $\rho$ a density operator. Then,
\begin{equation}\label{eq:X2_expval_upper_bound}
    \langle X^2 \rangle
    \leq
    (x_1 + x_d) \langle X\rangle
    - x_1 x_d.
\end{equation}
\end{prop}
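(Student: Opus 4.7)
The proposition is a quantum version of the classical Bhatia--Davis / Popoviciu-type inequality bounding the variance of a bounded random variable. The natural route is to promote the scalar inequality $(x_i - x_1)(x_d - x_i) \geq 0$, valid for every eigenvalue $x_i \in [x_1, x_d]$, to an operator inequality and then take the trace against $\rho$.

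First I would spectrally decompose $X = \sum_i x_i \ketbra{i}{i}$ and note the operator identity
\begin{equation*}
(X - x_1 \mathbb{I})(x_d \mathbb{I} - X) = (x_1 + x_d) X - x_1 x_d \mathbb{I} - X^2 .
\end{equation*}
The two factors commute (both are polynomials in $X$) and each is positive semidefinite, since the eigenvalues of $X - x_1 \mathbb{I}$ are $x_i - x_1 \geq 0$ and those of $x_d \mathbb{I} - X$ are $x_d - x_i \geq 0$. Hence the product is positive semidefinite, giving the operator inequality
\begin{equation*}
X^2 \leq (x_1 + x_d) X - x_1 x_d \mathbb{I}.
\end{equation*}

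Taking the expectation value against $\rho$, using linearity of the trace, $\tr{\rho X} = \langle X\rangle$, $\tr{\rho} = 1$, and the fact that $\tr{\rho M} \geq 0$ whenever $M \geq 0$, yields exactly Eq.~\eqref{eq:X2_expval_upper_bound}. Alternatively, working in the eigenbasis of $X$ with populations $p_i \equiv \mel{i}{\rho}{i} \geq 0$ satisfying $\sum_i p_i = 1$, one can simply multiply the scalar inequality $x_i^2 \leq (x_1 + x_d) x_i - x_1 x_d$ by $p_i$ and sum over $i$, which gives the same bound without invoking operator monotonicity.

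There is no real obstacle here; the only subtlety is checking positive semidefiniteness of the product $(X - x_1 \mathbb{I})(x_d \mathbb{I} - X)$, which is immediate once one observes the two factors commute. Once the inequality \eqref{eq:X2_expval_upper_bound} is in hand, subtracting $\langle X \rangle^2$ from both sides gives the promised upper bound \eqref{eq:bound_Delta_X} on $\Delta X^2$, as needed in the derivation of Eq.~\eqref{eq:diff_inequality}. Saturation occurs precisely when $\rho$ is supported on the span of the eigenvectors corresponding to $x_1$ and $x_d$, which is worth pointing out since it connects naturally to the saturation construction discussed in Appendix~\ref{app:proof5}.
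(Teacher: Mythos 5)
Your proof is correct and follows essentially the same route as the paper: both form the product $(X - x_1\mathbb{I})(x_d\mathbb{I} - X)\succeq 0$, expand to get the operator inequality $X^2 \preceq (x_1+x_d)X - x_1 x_d\mathbb{I}$, and take the expectation value with respect to $\rho$. Your remark that positive semidefiniteness of the product requires the two factors to commute is a small but worthwhile refinement of a step the paper leaves implicit.
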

\begin{proof}
    Notice that $x_1 \mathbb{I} \preceq X \preceq x_d \mathbb{I}$, where $\mathbb{I}$ is the identity operator. Hence, $(X - x_1\mathbb{I})$ and $(x_{d}\mathbb{I} - X)$ are both positive semidefinite operators, such that by multiplying them one has
    \begin{equation}\label{eq:proof_eq1}
        (X - x_1 \mathbb{I})(x_d \mathbb{I} -X) \succeq 0.
    \end{equation}
    Expanding and rearranging \eqref{eq:proof_eq1}, we get
    \begin{equation}
        X^2 \preceq (x_1 + x_d)X - x_1x_d \mathbb{I}.
    \end{equation}
    Therefore, taking the expectation value with respect to $\rho$, we arrive at the desired result.
\end{proof}
To conclude, in Figure~\ref{fig:X2_expval_upper_bound} we illustrate the inequality~\eqref{eq:X2_expval_upper_bound} and how it works.
\begin{figure}[ht!]
     \centering
     \includegraphics[width=0.65\columnwidth]{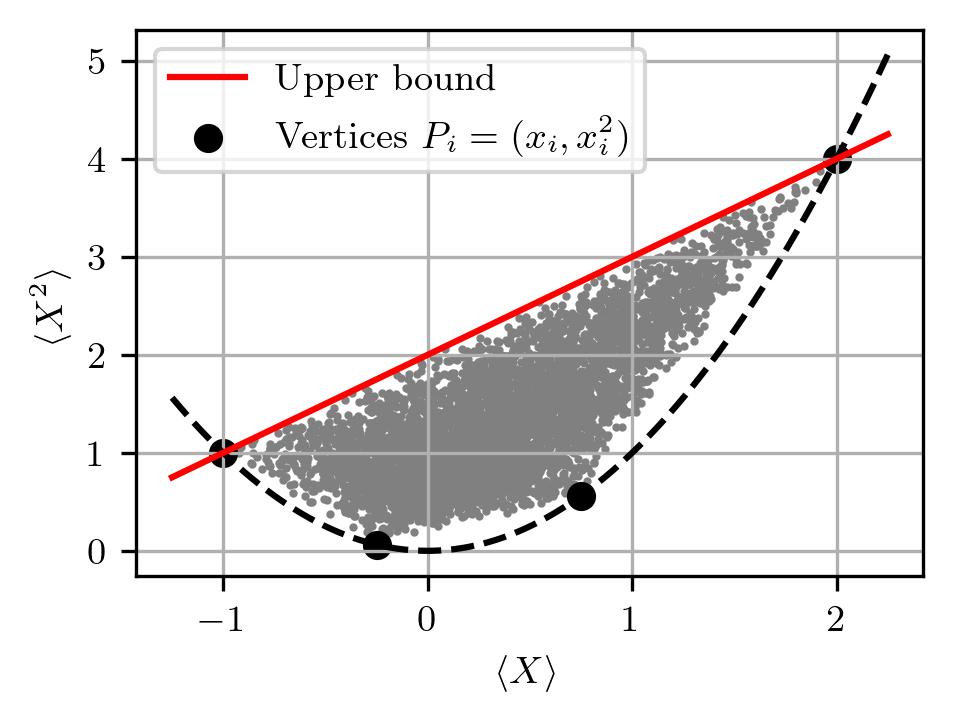}
     \caption{
     Illustration of inequality~\eqref{eq:X2_expval_upper_bound}. For a Hermitian operator $X$ with eigenvalues $x_1 \leq \cdots \leq x_d$, the set of possible points $(\langle X\rangle, \langle X^{2}\rangle)$ (gray dots) forms a convex polygon, whose vertices $P_i=(x_i, x_i^2)$ lie on a common parabolic path (dashed line). Since a parabola is a convex function, any $\langle X^2 \rangle$ can be bounded from above using the secant joining the vertices $P_1$ and $P_d$ (red line).
     }
     \label{fig:X2_expval_upper_bound}
\end{figure}

%%%%%%%%%%%%%%%%%%%%%%%%%%%%%%%
\section{Proof III: Solution of the differential inequality of Eq.~(\ref{eq:diff_inequality})}\label{app:proof4}
%%%%%%%%%%%%%%%%%%%%%%%%%%%%%%%

In this Appendix we solve the following differential inequality of Eq.~(\ref{eq:diff_inequality}) that we express in a more compact form as
\begin{equation}\label{eq:diffeq}
    \abs{\frac{dx}{dt}}
    \leq \omega \sqrt{ (a+b)x - x^2 - ab },
\end{equation}
where $x=x(t)$ is a time-dependent variable, and $\omega \geq 0$ and $a \leq b$ are two real constants. Also notice that the function $x$ is bounded as $a \leq x(t) \leq b$ for all times $t$.
By defining $x \equiv \alpha z + \beta$, with $\alpha \equiv (b - a)/2$ and $\beta \equiv (a + b) / 2$ (i.e., $a = \beta - \alpha$ and $b = \alpha + \beta$), we get the reduced form
\begin{equation}\label{eq:diff_eq_z}
    \abs{\frac{dz}{dt}} \leq \omega \sqrt{1 - z^2}  \,.
\end{equation}
Let us focus on the negative branch of the differential inequality \eqref{eq:diff_eq_z}, since the positive branch has an analogous solution. The negative branch of \eqref{eq:diff_eq_z} can be solved by substituting 
\begin{equation}\label{app:second_substitution}
z = \cos(\tau) 
\quad \Longleftrightarrow \quad
\dd z = -\sin(\tau) \, \dd\tau \,.   
\end{equation}
This is because the lower bound of \eqref{eq:diff_eq_z} entails the separable differential inequality  
\begin{equation}
    \frac{ \dd z }{ \sqrt{1-z^2} }
    \geq
    -\omega \, \dd t.
\end{equation}
Upon the substitution \eqref{app:second_substitution}, this simplifies to
\begin{equation}
    \frac{d\tau}{dt} \leq \omega
    \quad \Longrightarrow \quad
    \tau(t) \leq \tau_0 + \omega t \,,
\end{equation}
where $\tau_0$ is the value of $\tau$ at the initial time ($t=0$, in the main text of the paper) of the interval in which the differential inequality \eqref{eq:diffeq} is solved.
In this way, by replacing back $z$ and $x$, and using the fact that the cosine is decreasing in the region $[0, \pi]$, we get
\begin{equation}
    x(t)
    \geq
    \begin{dcases} 
        \alpha \cos(\tau_0 + \omega t) + \beta,
        & \text{for } t < t^* \\
        a,
        & \text{for }  t \geq t^*
        \label{eq:lower_bound}
    \end{dcases}
\end{equation}
where $\tau_0 = \arccos((x(0) - \beta) / \alpha) \in [0, \pi]$ and $t^{*}= (\pi - \tau_0) / \omega$. The solution is divided into branches because, despite the oscillating solution, the lower bound of $x(t)$ can never increase in value, given that the lower branch of the inequality \eqref{eq:diffeq} (symmetric with respect to $0$) is never positive. For the sake of clarity, see Fig.~\ref{fig:diffeq_solution} for a visual representation of the solution \eqref{eq:lower_bound}, concerning the real part of an actual Kirkwood-Dirac quasiprobability (KDQ). Therefore, there is a time interval (from 0 to $t^*$) for the analytical solution to be valid; beyond $t = t^*$, the lower bound must remain constant and equal to $a$.

Using standard trigonometric identities, we may rewrite the bound as
\begin{equation}
    x(t) \geq F(\tau_0 + \omega t),
\end{equation}
where 
\begin{equation}
    F(\tau)
    \equiv
    \begin{cases}
    b \cos^2(\frac{\tau}{2}) + a \sin^2(\frac{\tau}{2}) & 0 \leq \tau \leq \pi
    \\ 
    a & \tau \geq \pi
    \end{cases}.
\end{equation}

The positive branch of the differential equation \eqref{eq:diff_eq_z} can be integrated similarly to the negative branch, with the result that
\begin{equation}
    x(t) \leq F'(\tau_0 - \omega t),
\end{equation}
where $F'$ is similarly defined as
\begin{equation}
    F'(\tau)
    \equiv
    \begin{cases} 
    b \cos^2(\frac{\tau}{2}) + a \sin^2(\frac{\tau}{2}) & 0 \leq \tau \leq \pi
    \\ 
    b & \tau \leq 0 \,.
    \end{cases}
    \label{eq:upper_bound}
\end{equation}
Since we assumed only $t \geq 0$, the domain of the functions $F$ and $F'$ overlap only in the interval $[0, \pi]$ where they coincide. Therefore, we can write the bounds using a unified function $E$,
\begin{equation}
    E(\tau)\equiv
    \begin{cases}
    b & \tau \leq 0
    \\
    b \cos^2(\frac{\tau}{2}) + a \sin^2(\frac{\tau}{2}) & 0 \leq \tau \leq \pi
    \\ 
    a & \tau \geq \pi \,,
    \end{cases}
\end{equation}
and thus obtain
\begin{align}
    x(t) &\geq E(\tau_0 + \omega t), \label{eq:lower_bound_general} \\
    x(t) &\leq E(\tau_0 - \omega t), \label{eq:upper_bound_general} 
\end{align}
for $t \geq 0$.

\begin{figure}[ht!]
    \centering
    \includegraphics[width=0.65\columnwidth]{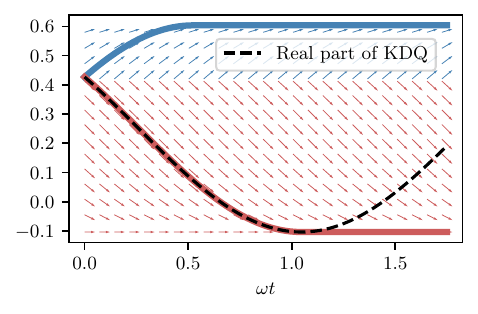}
    \caption{Upper and lower bounds (blue and red lines respectively) of the real part of the Kirkwood-Dirac quasiprobability $q_{1,1}(t)$ as a function of time (in dimensionless units). The bounds are obtained by integrating the differential inequality~\eqref{eq:diffeq}. It can be observed that the lower bound is saturated, being superimposed on the quasiprobability. The blue arrows represent the gradient field of the upper bound, while the red arrows correspond to the gradient field of the lower bound. The black dotted line depicts the time-evolution of $q_{1,1}(t)$. The quasiprobability drawn in the figure comes from taking $(\ket 0 + \ket 1) / \sqrt{2}$ as the initial state, and $A_1 = \ketbra{0}{0}$ as the projector of the initial measurement at $t=0$. The quantum system evolves unitarily under the Hamiltonian $H = \frac{\hbar\omega}{2} \, (\ketbra{r_1}{r_d} + \ketbra{r_d}{r_1})$, where $\ket{r_1}, \ket{r_d}$ are the eigenvectors of $\rho_1 \equiv \{\rho, A_1 \} / 2$ associated to the minimum and maximum eigenvalues $r_1$ and $r_d$, respectively. The projector of the second measurement (at time $t$) is $B_1 = \ketbra{\psi}{\psi}$, where $\ket{\psi} = \cos(\frac{\pi}{6}) \ket{r_1} - i \sin(\frac{\pi}{6}) \ket{r_d}$.}
    \label{fig:diffeq_solution}
\end{figure}

%%%%%%%%%%%%%%%%%%%%%%%%%%%%%%%
\section{Proof IV: Saturation of the bounds}\label{app:proof5}
%%%%%%%%%%%%%%%%%%%%%%%%%%%%%%%
The bounds~\eqref{eq:lower_bound_general} and \eqref{eq:upper_bound_general} can be saturated. In Fig.~\ref{fig:diffeq_solution}, we show an example where the real part of a KDQ is saturated. Let us now explain under which conditions the saturation of the bounds is achieved.

Let $X$ be an observable with eigenvalues $x_1 \leq \cdots \leq x_d$, corresponding to the eigenvectors $\ket{x_1}, \ldots, \ket{x_d},$ respectively.
Then, let us consider an initial state $\rho = \ketbra{\psi}{\psi}$ in the subspace generated by $\ket{x_1}, \ket{x_d}$, for example
\begin{equation}
    \ket{\psi}
    = \cos\left(\frac{\tau_0}{2}\right) \ket{x_d}
    - i\sin\left(\frac{\tau_0}{2}\right) \ket{x_1},
\end{equation}
for some angle $\tau_0$ in $[0, 2\pi]$.
Suppose also that the system evolves according to the Hamiltonian
\begin{equation}
    H = \frac{\hbar\omega}{2} \Big(
        \ketbra{x_1}{x_d}
    + \ketbra{x_d}{x_1}\Big),
\end{equation}
so that the system never leaves the subspace spanned by $\{ \ketbra{x_1}{x_1}, \ketbra{x_d}{x_d} \}$.
In this subspace, the unitary operator can be represented as
\begin{equation}
    U \rightarrow
    \begin{pmatrix}
        \phantom{-i}\cos(\omega t/2) & -i\sin(\omega t/2) \\
        -i\sin(\omega t/2) & \phantom{-i}\cos(\omega t/2) \\
    \end{pmatrix}
\end{equation}
that leads to
\begin{equation}
    \expval{X}_{t}
    =
    x_d \cos^2\left(\frac{\tau_0 + \omega t}{2}\right)
    +
    x_1 \sin^2\left(\frac{\tau_0 + \omega t}{2}\right).
    \label{eq:expval_saturation}
\end{equation}
Eq.~\eqref{eq:expval_saturation} matches either the bound \eqref{eq:lower_bound_general} or \eqref{eq:upper_bound_general} depending on the value of $\tau_0$. If $\tau_0 \in [0, \pi]$, then the expectation value in Eq.~\eqref{eq:expval_saturation} coincides with the time-varying branch of the lower bound \eqref{eq:lower_bound_general}.
On the other hand, if $\tau_0 \in [\pi, 2\pi]$, then $\cos^2((\tau_0 + \omega t)/2) = \cos^2((\tau_0 - \omega t)/2)$ such that Eq.~\eqref{eq:expval_saturation} saturates the upper bound \eqref{eq:upper_bound_general}.
Notice that this reasoning is true because we have assumed that $\rho$ is a pure state, which means: $\Delta L = 2 \Delta H / \hbar = \omega$.

The saturation of the bounds for $\expval{X}_{t}$ entails also the saturation of the bounds on the evolution of the real and imaginary parts of KDQ. In fact, let us consider, for example, the real part of the KDQ $q_{1,1}(t)$ in Fig.~\ref{fig:diffeq_solution}. In this case, the observable $X$ is the operator $\rho_1 = \{\rho, A_1\} / 2$, where $\rho$ is the density operator associated to the pure quantum state, and $A_1$ is a rank-1 projector. We denote with $r_1 \leq \ldots \leq r_d$ the eigenvalues of $\rho_1$, and with $\ket{r_1}, \ldots, \ket{r_d}$ the corresponding eigenvectors, respectively. All the other eigenvalues besides $r_1$ and $r_d$ are zero, in this case. Then, we consider the Hamiltonian $H = - \frac{\hbar \omega}{2} \, (\ketbra{r_1}{r_d} + \ketbra{r_d}{r_1})$, and the operator $B_1 =  \ketbra{\psi}{\psi}$ that plays the role of the operator $\rho$ with respect to which $\expval{X}_{t}$ is computed. In $B_1$, $\ket{\psi} = \cos(\frac{\tau_0}{2}) \ket{r_1} - i \sin(\frac{\tau_0}{2}) \ket{r_d}$ for some angle $\tau_0 \in [0, 2\pi]$. Using the same reasoning of the previous paragraph, we can determine that
\begin{equation}
    \Re{q_{1,1}(t)}
    = \tr{\rho_1 \, U^\dagger B_1 U}
    \equiv
    \expval{\rho_1}_{1,t}
\end{equation}
saturates one of the bounds among \eqref{eq:lower_bound_general} and \eqref{eq:upper_bound_general} for a specific value of $\tau_0$, as illustrated in Fig.~\ref{fig:diffeq_solution}.

%%%%%%%%%%%%%%%%%%%%%%%%%%%%%%%%%%%%%%%%%%
\printbibliography

\end{document}